\documentclass[11pt,oneside,a4paper]{article}
\usepackage{fullpage}
\usepackage[affil-it]{authblk}
\usepackage{amsmath}
\usepackage{amsthm}
\usepackage{amsfonts}
\usepackage{amssymb}
\usepackage{mathtools}
\usepackage{algpseudocode}
\usepackage{algorithm}
\usepackage{graphicx} 
\usepackage{units} 

\newtheorem{theorem}{Theorem}
\newtheorem{lemma}{Lemma}

\newcommand{\eg}{e.\,g., }
\newcommand{\ie}{i.\,e., }

\makeatletter
\makeatother

\title{\textbf{Revisiting the Graph Isomorphism Problem with Semidefinite Programming}}
\author{Giannis Nikolentzos\textsuperscript{\rm 1} and Michalis Vazirgiannis\textsuperscript{\rm 1,2}\\ 
\textnormal{\textsuperscript{\rm 1}\'Ecole Polytechnique, France}\\
\textnormal{\textsuperscript{\rm 2}Athens University of Economics and Business, Greece}\\
\textnormal{\{nikolentzos,mvazirg\}@aueb.gr}
}

\date{preliminary version\footnote{Note: this article has not been peer reviewed yet.} \\October 28, 2019}

\begin{document}
\maketitle 

\begin{abstract}
It is well-known that the graph isomorphism problem can be posed as an equivalent problem of determining whether an auxiliary graph structure contains a clique of specific order.
However, the algorithms that have been developed so far for this problem are either not efficient or not exact.
In this paper, we present a new algorithm which solves this equivalent formulation via semidefinite programming.
Specifically, we show that the problem of determining whether the auxiliary graph contains a clique of specific order can be formulated as a semidefinite programming problem, and can thus be (almost exactly) solved in polynomial time.
Furthermore, we show that we can determine if the graph contains such a clique by rounding the optimal solution to the nearest integer.
Our algorithm provides a significant complexity result in graph isomorphism testing, and also represents the first use of semidefinite programming for solving this problem.
\end{abstract}

\section{Introduction}
Graph isomorphism is one of those few fundamental problems in NP whose computational status still remains unknown \cite{garey1979computers}.
Roughly speaking, the graph isomorphism problem asks whether two graphs are structurally identical or not.
The problem is clearly in NP.
However, it has been neither proven NP-complete nor found to be solved by a polynomial time algorithm.
In fact, there is strong evidence that graph isomorphism is not NP-complete since it has been shown that the problem is located in the low hierarchy of NP \cite{schoning1988graph}.
This implies that if the problem was NP-complete, then the polynomial time hierarchy would collapse to its second level.

Over the years, algorithms of different nature have been developed to attack the problem.
Traditionally, those that draw ideas from group theory turn out to be the most promising.
One of these group-theoretic algorithms was proposed by Babai and Luks in 1983 \cite{babai1983canonical}.
The algorithm combines a preprocessing procedure proposed by Zemlyachenko et al. \cite{zemlyachenko1985graph} with an efficient algorithm for solving graph isomorphism on graphs of bounded degree \cite{luks1982isomorphism}.
Its computational complexity is $2^{\mathcal{O}(\sqrt{nlogn})}$ where $n$ denotes the number of vertices.
Despite decades of active research, no progress had been achieved, and this was the best known algorithm until recently when Babai presented an algorithm that solves the graph isomorphism problem in quasi-polynomial time \cite{babai2016graph}.
It should be mentioned that while the complexity status of the graph isomorphism for general graphs remains a mystery, for many restricted graph classes, polynomial time algorithms are known.
This is, for example, the case for planar graphs \cite{hopcroft1974linear}, graphs of bounded degree \cite{luks1982isomorphism}, or graphs with bounded eigenvalue multiplicity \cite{babai1982isomorphism}. 
It should also be noted that there exist several algorithms which have proven very efficient for graphs of practical interest \cite{mckay2014practical,junttila2007engineering,darga2008faster}.
Interestingly, these algorithms are very different from the ones that offer the lowest worst case complexities.
This indicates that there is a wide gap between theory and practice.

Besides the above algorithms, there are also several scalable heuristics for graph isomorphism which are based on continuous optimization.
In these heuristics, the discrete search problem in the space of permutation matrices is replaced by an optimization problem with continuous variables, enabling the use of efficient continuous optimization algorithms.
Formally, for any two graphs on $n$ vertices with respective $n \times n$ adjacency matrices $A_1$ and $A_2$, the optimization problem consists in minimizing the function $||A_1 - P A_2 P^\top||_F$ over all $P \in \Pi$, where $\Pi$ denotes the set of $n \times n$ permutation matrices, and $||\cdot||_F$ is the Froebenius matrix norm \cite{aflalo2015convex}.
Therefore, the problem of graph isomorphism can be reformulated as the problem of minimizing the above function over the set of permutation matrices.
The two graphs are isomorphic to each other if there exists a permutation matrix $P$ for which the above function is equal to $0$.
Note also that other objectives have also been proposed in the literature, this being perhaps the most common.
This problem has a combinatorial nature and there is no known polynomial algorithm to solve it.
Numerous approximate methods have been developed.
Most of these methods replace the space of permutations by the space of doubly-stochastic matrices.
Let $\mathcal{D}$ denote the set of $n \times n$ doubly stochastic matrices, \ie nonnegative matrices with row and column sums each equal to $1$.
The convex relaxed problem minimizes the function $|| A - D A D^\top||_F^2$ over all $D \in \mathcal{D}$.
There is a polynomial-time algorithm for exactly solving the convex relaxed graph matching problem \cite{goldfarb1990n}.
However, due to relaxation, even if there exists a doubly stochastic matrix $D$ for which the objective function is equal to $0$, there is no guarantee that the two graphs are isomorphic to each other.

The main contribution of this work is a novel algorithm which attacks efficiently the problem of graph isomorphism.
The main tools employed are a \textit{compatibility graph}, \ie an auxiliary graph structure that is useful for solving general graph and subgraph isomorphism problems, and a \textit{semidefinite programming} formulation.
Given two graphs of order $n$, we build their compatibility graph of order $n^2$.
We show that testing the two graphs for isomorphism is equivalent to determining whether the compatibility graph contains a clique of order $n$.
We show that this problem can be formulated as a semidefinite programming optimization problem, and can thus be (almost exactly) solved in polynomial time with readily available solvers.
We show that the two graphs are isomorphic to each other if the optimal value of the semidefinite program is arbitrarily close to $n(n-1)$.
Our algorithm demonstrates the usefulness of semidefinite programming in combinatorial optimization, and provides a significant complexity result in graph isomorphism testing.
It should be mentioned that our work is not the first to apply continuous optimization approaches to the problem of determining whether the compatibility graph contains a clique of order $n$.
In a previous study, Pelillo developed a heuristic for computing the clique number of the compatibility graph \cite{pelillo1999replicator}.
However, in contrast to the proposed algorithm, this method provides no guarantees, and may get stuck on some local optimum of the objective function.

\section{Preliminaries}
In this Section, we first define our notation, and we then introduce the concept of a compatibility graph.
We show that the graph isomorphism problem is equivalent to finding if the compatibility graph contains a clique of specific size.
We also present the basic concepts of semidefinite programming and an algorithm which is based on Lov\'asz $\vartheta$ number and for almost all classes of graphs can decide if two instances are isomorphic to each other.

\subsection{Graph Theory Notation and Terminology}
Let $G = (V,E)$ be an undirected and unweighted graph consisting of a set $V$ of vertices and a set $E$ of edges between them.
We will denote by $n$ the number of vertices.
The adjacency matrix of $G$ is a symmetric matrix $A \in \mathbb{R}^{n \times n}$ defined as follows: $A_{i,j} = 1$ if $(i,j) \in E$, and $0$ otherwise.
Note that since the graph is undirected, $(i,j) \in E$ if and only if $(j,i) \in E$, for all $i,j \in V$.
Two graphs $G_1 = (V_1,E_1)$ and $G_2 = (V_2,E_2)$ are isomorphic (denoted by $G_1 \cong G_2$), if there is a bijective mapping $\phi : V_1 \rightarrow V_2$ such that $(v_i,v_j) \in E_1$ if and only if $(\phi(v_i), \phi(v_j)) \in E_2$.

We next present the notion of a compatibility graph, \ie a structure that is very useful for solving graph/subgraph isomorphism and maximum common subgraph problems.
These auxiliary graph structures have been proposed independently by several authors \cite{levi1973note,barrow1976subgraph,kozen1978clique}, while they also lie at the core of several algorithms \cite{pelillo1999replicator,koch2001enumerating,kriege2012subgraph}.
Furthermore, different authors have used different names to describe them.
For instance, compatibility graphs, association graphs, derived graphs, M-graphs, and product graphs are all names that have been coined to describe these structures. 
In what follows, we will use the name \textit{compatibility graph} to refer to them.
Formally, given two graphs $G_1 = (V_1 , E_1)$ and $G_2 = (V_2 , E_2)$, their compatibility graph $G_c = (V_c , E_c)$ is a graph with vertex set $V_c = V_1 \times V_2$.
An edge is drawn between two vertices $(v_1, u_1), (v_2, u_2) \in V_c$ if and only if $v_1 \neq v_2$, $u_1 \neq u_2$ and either $e_1 = (v_1, v_2) \in E_1$ and $e_2 = (u_1, u_2) \in E_2$ or $e_1 \not \in E_1$ and $e_2 \not \in E_2$.
Clearly, there are two types of edges in a compatibility graph: (1) edges that represent common adjacency, and (2) edges that represent common non-adjacency.
An example of the compatibility graph that emerges from two $P_3$ graphs is illustrated in Figure~\ref{fig:compatibility_graph}.
\begin{figure*}[t]
    \centering
    \includegraphics[width=.42\linewidth]{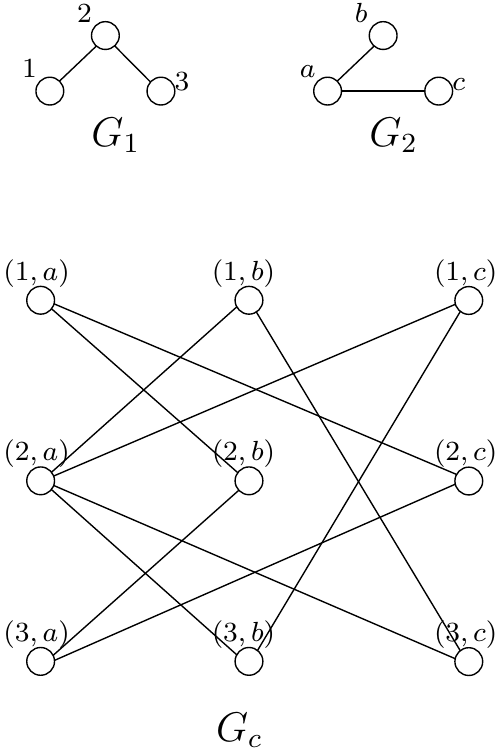}
    \caption{Two graphs (top left and right) and their compatibility graph (bottom).}
    \label{fig:compatibility_graph}
\end{figure*}

Levi established a relation between isomorphic subgraphs of two graphs and cliques in their compatibility graph \cite{levi1973note}.
Specifically, if some vertices $(v_1, u_1), (v_2, u_2), \ldots, (v_k, u_k) \in V_c$ form a clique, and are thus pairwise adjacent, then the subgraph in $G_1$ induced by the vertices $v_1, v_2, \ldots, v_k$ is isomorphic to the subgraph in $G_2$ induced by the vertices $u_1, u_2, \ldots, u_k$.
The isomorphism is given by the vertices $(v_1, u_1), (v_2, u_2), \ldots, (v_k, u_k) \in V_c$ of the compatibility graph that form the clique, \ie $\phi(u_1) = v_1, \phi(u_2) = v_2, \ldots, \phi(u_k) = v_k$.

The next Theorem establishes an equivalence between the graph isomorphism problem and the maximum clique problem on the compatibility graph, and corresponds to a sub-instance of Levi's result.
For completeness, we also provide the proof.
Note that if $G_1$ and $G_2$ are two graphs of order $n$, the maximum clique of their compatibility graph will consist of at most $n$ vertices.
\begin{theorem}
  Let $G_1$ and $G_2$ be two graphs of order $n$, and let $G_c$ be their compatibility graph.
  Then, $G_1$ and $G_2$ are isomorphic if and only if $G_c$ contains a clique of order $n$, \ie $\omega(G_c) = n$.
\end{theorem}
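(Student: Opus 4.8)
The plan is to prove the two implications separately, exploiting the correspondence between cliques in $G_c$ and partial isomorphisms that was described above. First I would record a structural observation that makes the ``only if'' direction almost immediate: in any clique of $G_c$, all first coordinates are pairwise distinct and all second coordinates are pairwise distinct, because the definition of $E_c$ requires $v_1 \neq v_2$ and $u_1 \neq u_2$ before an edge may be present. Consequently a clique of $G_c$ has at most $\min(|V_1|,|V_2|) = n$ vertices, so $\omega(G_c) \le n$ always, and a clique of order $n$ is necessarily a maximum clique; this is what makes the statement $\omega(G_c) = n$ equivalent to ``$G_c$ contains a clique of order $n$''.

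For the forward direction, suppose $G_1 \cong G_2$ via a bijection $\phi \colon V_1 \to V_2$. I would consider the $n$ vertices $(v, \phi(v)) \in V_c$ for $v \in V_1$ and check they form a clique. Take two distinct vertices $v, v'$ of $V_1$; since $\phi$ is injective, $\phi(v) \neq \phi(v')$, so the distinctness requirement on both coordinates holds. Since $\phi$ is an isomorphism, $(v,v') \in E_1$ if and only if $(\phi(v),\phi(v')) \in E_2$, so the pair $(v,\phi(v)),(v',\phi(v'))$ is joined by an edge of $G_c$ (of the common-adjacency type when $(v,v') \in E_1$, of the common-non-adjacency type otherwise). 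Hence these $n$ vertices are pairwise adjacent, yielding a clique of order $n$, and combined with the bound above we get $\omega(G_c) = n$.

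For the converse, suppose $G_c$ has a clique $C = \{(v_1,u_1), \ldots, (v_n,u_n)\}$. By the structural observation, $v_1, \ldots, v_n$ are pairwise distinct and hence exhaust $V_1$, and likewise $u_1, \ldots, u_n$ exhaust $V_2$, so the map $\phi \colon V_1 \to V_2$ defined by $\phi(v_i) = u_i$ is a well-defined bijection. It then remains to verify it is an isomorphism: for $i \neq j$, adjacency of $(v_i,u_i)$ and $(v_j,u_j)$ in $G_c$ means either $(v_i,v_j) \in E_1$ and $(u_i,u_j) \in E_2$, or $(v_i,v_j) \notin E_1$ and $(u_i,u_j) \notin E_2$; in both cases $(v_i,v_j) \in E_1 \iff (\phi(v_i),\phi(v_j)) \in E_2$. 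Thus $\phi$ preserves edges and non-edges, so $G_1 \cong G_2$. Equivalently, one may simply invoke Levi's result quoted above with $k = n$, noting that the induced subgraphs on $\{v_1,\ldots,v_n\}$ and $\{u_1,\ldots,u_n\}$ are all of $G_1$ and $G_2$.

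Since both directions are elementary, I do not anticipate a serious obstacle; the only point requiring any care is the bookkeeping that a size-$n$ clique forces the two coordinate lists to be complete enumerations of $V_1$ and $V_2$, which is precisely what upgrades the extracted map from a partial injection to a genuine bijection.
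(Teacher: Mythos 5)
Your proof is correct and follows essentially the same route as the paper: construct the clique $\{(v,\phi(v))\}$ from an isomorphism, and extract a bijection from a size-$n$ clique using the pairwise distinctness of coordinates. You are in fact slightly more careful than the paper at the one delicate point --- explicitly justifying that the first and second coordinates of an $n$-clique must exhaust $V_1$ and $V_2$, which the paper asserts with only ``by definition.''
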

\begin{proof}
  If $G_1 = (V_1,E_1)$ and $G_2 = (V_2,E_2)$ are isomorphic, then there exists a bijective mapping $\phi : V_1 \rightarrow V_2$ such that $(v_i,v_j) \in E_1$ if and only if $(\phi(v_i),\phi(v_j)) \in E_2$.
  Then, by construction, there are $n$ vertices $(v_1, \phi(v_1)), \ldots, (v_n, \phi(v_n)) \in V_c$ which are connected to each other by an edge, \ie $\big((v_1, \phi(v_1)),$ $(v_2, \phi(v_2))\big)$, $\ldots$, $\big((v_{n-1}$, $\phi(v_{n-1}))$, $(v_n, \phi(v_n))\big) \in E_c$.
  These vertices form a clique of order $n$, and therefore, $\omega(G_c) = n$.
  For the second part, given a compatibility graph $G_c$ that contains a clique of order $n$, let $(v_1, u_1), \ldots, (v_n, u_n) \in V_c$ denote the $n$ vertices that form the clique.
  By definition, $V_1=\{ v_1,\ldots,v_n \}$ and $V_2=\{ u_1,\ldots,u_n \}$.
  Then, we can construct a bijective mapping $\phi : V_1 \rightarrow V_2$ as follows: $\phi(v_i) = u_i$.
  Since $\big((v_i, \phi(v_i)), (v_j, \phi(v_j))\big) \in E_c$ for $i \in \{ 0,\ldots,n \}$, we have that $(v_i,v_j)$ $\in E_1$ if and only if $(\phi(v_i),\phi(v_j)) \in E_2$.
  Therefore, $G_1$ and $G_2$ are isomorphic to each other.
\end{proof}

Note that the number of isomorphisms between two graphs can be exponential to the number of vertices of the graphs $n$.
Specifically, if $G_1$ and $G_2$ are isomorphic, then the number of automorphisms of $G_1$ (or of $G_2$) is equal to the number of isomorphisms from $G_1$ to $G_2$.
Hence, if, for instance, $G_1$ and $G_2$ are complete graphs on $n$ vertices, \ie both correspond to the complete graph $K_n$, then the number of isomorphisms between the two graphs is equal to $n!$ since $|Aut(G_1)|=|Aut(G_2)|=|Aut(K_n)|=n!$.
Each isomorphism $\phi : V \rightarrow V'$ corresponds to an $n$-clique in the compatibility graph.
Therefore, the compatibility graph can contain up to $n!$ cliques.
As an example, consider the graphs shown in Figure~\ref{fig:compatibility_graph}.
There are two isomorphisms between $G_1$ and $G_2$.
Hence, their compatibility graph contains exactly two cliques of order $3$.

We now introduce the following Lemma which we will use in the next Section. 
\begin{lemma}
  Let $G_1$ and $G_2$ be two graphs of order $n$, and let $G_c$ be their compatibility graph.
  Then, the vertices of $G_c$ can be grouped into $n$ partitions such that there are no edges between vertices that belong to the same partition.
  \label{lemma:lem1}
\end{lemma}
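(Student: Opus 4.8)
The plan is to exhibit the partition explicitly and then verify its two required properties directly from the definition of the compatibility graph; no clever construction is needed. Since $V_c = V_1 \times V_2$ with $V_1 = \{v_1, \ldots, v_n\}$, I would group the vertices of $G_c$ according to their first coordinate: for each $i \in \{1, \ldots, n\}$, set $P_i = \{(v_i, u) : u \in V_2\} = \{v_i\} \times V_2$.

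First I would check that $\{P_1, \ldots, P_n\}$ is indeed a partition of $V_c$ into $n$ blocks. This is immediate: the blocks are pairwise disjoint because the singletons $\{v_i\}$ are pairwise disjoint, their union is $V_1 \times V_2 = V_c$ because $\bigcup_{i} \{v_i\} = V_1$, and each block is nonempty since $|V_2| = n \geq 1$; hence there are exactly $n$ nonempty blocks.

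Next I would verify the independence property. Take any two distinct vertices $(v_i, u), (v_i, u') \in P_i$ lying in the same block; by construction they share the first coordinate $v_i$. By the definition of $E_c$, an edge between two vertices $(v_1, u_1), (v_2, u_2) \in V_c$ can exist only when $v_1 \neq v_2$ (and $u_1 \neq u_2$). Since the two vertices under consideration have equal first coordinates, this necessary condition fails, so they are non-adjacent in $G_c$. Hence $P_i$ induces an edgeless subgraph, which is exactly what is required.

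There is essentially no obstacle here: the statement is a direct consequence of the ``$v_1 \neq v_2$'' clause built into the edge relation of the compatibility graph, which forces each fiber $\{v_i\} \times V_2$ to be edge-free. I would also remark, in passing, on the evident symmetry: partitioning instead by the second coordinate, \ie using the blocks $V_1 \times \{u_j\}$ for $j \in \{1, \ldots, n\}$, works identically by the ``$u_1 \neq u_2$'' clause, and this dual partition may be the more convenient one to invoke in the next Section.
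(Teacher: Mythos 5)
Your proof is correct and follows essentially the same route as the paper: both partition $V_c$ by the first coordinate into the fibers $\{v_i\}\times V_2$ and observe that the ``$v_1\neq v_2$'' clause in the definition of $E_c$ makes each fiber an independent set. Your verification is if anything slightly more explicit than the paper's, and the remark about the dual partition by second coordinate is a harmless bonus.
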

\begin{proof}
  Let $V_1$ and $V_2$ denote the sets of vertices of $G_1$ and $G_2$, respectively.
  Then, $V_1 = \{ v_1,\ldots,v_n \}$ and $V_2 = \{ u_1,\ldots,u_n \}$.
  Let also $V_c$ and $E_c$ denote the set of vertices and edges of the compatibility graph $G_c$, and $(a, b), (c, d) \in V_c$.
  Then, by definition, if $a = c$, $\big( (a, b), (c, d) \big) \not \in E_c$.
  The set of nodes $V_c$ can be decomposed into the following $n$ disjoint sets: $P_1 = \{ (v_1, u_1), (v_1, u_2),$ $\ldots, (v_1, u_n) \}$, $P_2 = \{ (v_2, u_1), (v_2, u_2),$ $\ldots, (v_2, u_n) \},\ldots$, $P_n = \{ (v_n, u_1), (v_n, u_2), \ldots,$ $(v_n, u_n) \}$.
  Then, $V_c = P_1 \cup P_2 \cup \ldots \cup P_n$.
  Clearly, there is no edge between each pair of vertices of each partition.
  This concludes the proof.
\end{proof}

\subsection{Semidefinite Programming}
A semidefinite program (SDP) is the problem of optimizing a linear function over the intersection of the cone of positive semidefinite matrices with an affine space.
Semidefinite programming has attracted a lot of attention in recent years since many practical problems in operations research and combinatorial optimization can be modeled or approximated as semidefinite programming problems.
For instance, in control theory, SDPs are used in the context of linear matrix inequalities.
In graph theory, the problem of computing the Lov\'asz number of a graph can be formulated as a semidefinite program.
Given any $\epsilon > 0$, semidefinite programs can be solved within an additive error of $\epsilon$ in polynomial time.
There are several different algorithms for solving SDPs.
For instance, this can be done through the ellipsoid algorithm \cite{grotschel2012geometric} or through interior-point methods \cite{nesterov1994interior}.

\subsection{An Algorithm for Almost all Classes of Graphs}
It follows from Lemma~\ref{lemma:lem1} that the set of vertices $V_c$ of a compatibility graph can be decomposed into $n$ disjoint sets $P_1, P_2, \ldots, P_n$ such that there is no edge that connects vertices of the same set.
Since a compatibility graph can be decomposed into $n$ disjoint sets such that every edge $e \in E_c$ connects a vertex in $P_i$ to one in $P_j$ with $i \neq j$, the minimum number of colors required for a proper coloring of $G_c$ is no more than $n$.
Therefore, $\chi(G_c) \leq n$.

For an arbitrary graph $G$, it is well-known that $\omega(G) \leq \chi(G)$.
Therefore, a compatibility graph can contain a clique of order $n$ only if $\chi(G_c) = n$.
Computing the chromatic number of a graph is in general an NP-complete problem.
However, we can compute in polynomial time a real number $\vartheta(\bar{G})$ that is ``sandwiched'' between the clique number and the chromatic number of a graph $G$, that is $\omega(G) \leq \vartheta(\bar{G}) \leq \chi(G)$ \cite{grotschel1981ellipsoid}.
This number is known as the Lov\'asz number of $G$.
Again, a compatibility graph can contain a clique of order $n$ only if $\vartheta(\bar{G}_c) = \chi(G_c) = n$.
However, the fact that $\vartheta(\bar{G}_c) = \chi(G_c) = n$ does not imply that $\omega(G) = n$.
Instead, it may hold that $\omega(G_c) < \vartheta(\bar{G}_c) = \chi(G_c) = n$.
One class of graphs for which the above holds is the family of latin square graphs.
Although for the class of compatibility graphs, it may hold that $\omega(G) = n$ whenever $\vartheta(\bar{G}) = \chi(G) = n$, we do not study this any further, but we leave it as future work.

\section{The Main Result}
Next, we give a SDP based solution for the graph isomorphism problem.
We show that the problem of identifying whether a compatibility graph contains a clique of order $n$ can be expressed as a SDP.
For the ease of presentation, we start with a simple formulation where we assume that the rank of the matrix involved in the objective function of the SDP is $1$.
Unfortunately, this constraint is not convex, and renders the problem NP-hard.
We then drop the rank constraint, and show that the emerging SDP can successfully deal with the general case.
 
\subsection{Rank-$1$ Case}
Define a variable $x_i$ for every vertex $i \in V_c$, and let $X = xx^\top$.
Let also $m = n^2$.
Then, we have that:
\begin{equation*}
  X = xx^\top \Leftrightarrow X \in S_m, \, X \succeq 0, \, \mathrm{rank}(X) = 1
\end{equation*}
where $S_m$ is the set of all $m \times m$ real symmetric matrices and $X \succeq 0$ means that the matrix variable $X$ is positive semidefinite.
Let also $J$ denote the $m \times m$ matrix of ones.
Consider now the following optimization problem:
\begin{subequations}
  \begin{align}
    \underset{X}{\text{maximize}} \quad &\mathrm{trace}(JX) \label{eq:opt1_obj} \\
    \text{subject to} \quad &\mathrm{trace}(X) = n, \label{eq:opt1_con1} \\
    &X_{i,j} = 0, \; (i,j) \not \in E_c, \label{eq:opt1_con2} \\
    &X_{i,j} \leq 1, \; (i,j) \in E_c, \label{eq:opt1_con3} \\
    &X \succeq 0, \label{eq:opt1_con4} \\
    &\mathrm{rank}(X) = 1. \label{eq:opt1_con5}
  \end{align}
  \label{eq:opt1}
\end{subequations}
We can determine if there is a clique of order $n$ in $G_c$ by solving the above optimization problem.
Specifically, if there exists such a clique in $G_c$, the value of the optimal solution to the problem is equal to $n^2$.

\begin{lemma}
  The value of problem~(\ref{eq:opt1}) is no greater than $n^2$.
  \label{lemma:lem2}
\end{lemma}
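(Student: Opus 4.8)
The plan is to show that the feasible set of problem~(\ref{eq:opt1}) forces $\mathrm{trace}(JX)$ to be bounded by $n^2$, using only the constraints~(\ref{eq:opt1_con1})--(\ref{eq:opt1_con5}) together with the structure established in Lemma~\ref{lemma:lem1}. First I would unpack the objective: since $J = \mathbf{1}\mathbf{1}^\top$ where $\mathbf{1}$ is the all-ones vector, we have $\mathrm{trace}(JX) = \mathbf{1}^\top X \mathbf{1} = \sum_{i,j} X_{i,j}$. Because $\mathrm{rank}(X) = 1$ and $X \succeq 0$, we may write $X = xx^\top$ for some real vector $x \in \mathbb{R}^m$, so that $\mathrm{trace}(JX) = \left(\sum_i x_i\right)^2$ and $\mathrm{trace}(X) = \sum_i x_i^2 = n$ by~(\ref{eq:opt1_con1}).

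The next step is to exploit the partition from Lemma~\ref{lemma:lem1}. Writing $V_c = P_1 \cup \cdots \cup P_n$ with no edges inside any $P_k$, constraint~(\ref{eq:opt1_con2}) gives $X_{i,j} = 0$ whenever $i,j$ lie in the same part $P_k$ with $i \neq j$; and since $X = xx^\top$, this means $x_i x_j = 0$ for all such pairs. Hence within each part $P_k$ at most one coordinate $x_i$ is nonzero. Let $y_k$ denote that (unique, or zero) coordinate value for part $k$. Then $\sum_i x_i = \sum_{k=1}^n y_k$ and $\sum_i x_i^2 = \sum_{k=1}^n y_k^2 = n$. By the Cauchy--Schwarz inequality, $\left(\sum_{k=1}^n y_k\right)^2 \leq n \sum_{k=1}^n y_k^2 = n \cdot n = n^2$. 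Therefore $\mathrm{trace}(JX) = \left(\sum_i x_i\right)^2 \leq n^2$, which is the claimed bound.

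I expect the main obstacle to be handling the rank-$1$ factorization cleanly: strictly speaking $X = xx^\top$ may also be written with $-x$, but since only $x_i^2$ and $(\sum x_i)^2$ appear this is harmless, and one should note that the ``at most one nonzero per part'' observation is exactly what converts the quadratic objective into a form amenable to Cauchy--Schwarz. A secondary point worth stating carefully is that constraint~(\ref{eq:opt1_con3}) is not actually needed for this particular bound (it will matter for the converse direction, i.e.\ that a clique attains value $n^2$), so the lemma follows from~(\ref{eq:opt1_con1}), (\ref{eq:opt1_con2}), (\ref{eq:opt1_con4}), (\ref{eq:opt1_con5}) and Lemma~\ref{lemma:lem1} alone. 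One could alternatively avoid the explicit factorization by arguing directly on the PSD matrix $X$: the partition structure makes the principal submatrix of $X$ on each $P_k$ diagonal, and a Cauchy--Schwarz / trace argument on $\langle J, X\rangle$ versus $\langle I, X\rangle$ restricted to the block structure yields the same conclusion; I would present whichever version is shorter.
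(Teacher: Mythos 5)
Your proof is correct, but it reaches the bound by a different route than the paper. Both arguments begin the same way: unpack $\mathrm{trace}(JX)$, use constraint~(\ref{eq:opt1_con1}) to account for the diagonal, and invoke Lemma~\ref{lemma:lem1} together with constraint~(\ref{eq:opt1_con2}) to conclude that at most one coordinate of $x$ per partition is nonzero. At that point the paper bounds each of the $n(n-1)$ surviving off-diagonal products $x_i x_j$ by $1$ using the box constraint~(\ref{eq:opt1_con3}), giving $\mathrm{trace}(JX) \leq n + n(n-1) = n^2$. You instead write $\mathrm{trace}(JX) = \bigl(\sum_i x_i\bigr)^2 = \bigl(\sum_k y_k\bigr)^2$ and apply Cauchy--Schwarz against the trace constraint $\sum_k y_k^2 = n$. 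Your version buys two things: it shows constraint~(\ref{eq:opt1_con3}) is not needed for the upper bound in the rank-$1$ case (a point the paper does not make), and it is essentially the rank-$1$ specialization of the argument the paper itself deploys later for the general SDP in Lemma~\ref{lemma:lem3}, where the partition sums $w_k$ play the role of your $y_k$ and Cauchy--Schwarz plus AM--GM replace the box constraint. The paper's version is more elementary (no Cauchy--Schwarz) but leans on an extra constraint. Your handling of the $\pm x$ ambiguity in the factorization is fine, and your closing remark about which constraints are actually used is accurate.
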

\begin{proof}
  The value of the objective function of the optimization problem is:
  \begin{equation*}
    \begin{split}
      \mathrm{trace}(JX) = \sum_{i=1}^{m} \sum_{j=1}^{m} J_{i,j} X_{i,j} &= J_{1,1} \, X_{1,1} + \ldots + J_{m,m} \, X_{m,m} \\
      &= x_1 x_1 + \ldots + x_m x_m
    \end{split}
  \end{equation*}
  From constraint~(\ref{eq:opt1_con1}), it follows that the sum of the diagonal terms of matrix $X$ is equal to $n$, \ie $x_1 x_1 + x_2 x_2 + \ldots + x_m x_m = n$.
  We next replace these terms with their sum.
  Therefore, the objective function of the optimization problem becomes:
  \begin{equation*}
    \mathrm{trace}(JX) = \sum_{i=1}^{m} \sum_{j=1}^{m} J_{i,j} X_{i,j} = n+ x_1 x_2 + \ldots + x_m x_{m-1}
  \end{equation*}
  We will next show that in our setting, we can have at most $n$ vertices for which $x_i \neq 0$.
  It follows from the constraint~(\ref{eq:opt1_con2}) that $\forall (i,j) \not \in E_c$, $X_{i,j} = x_i x_j = 0$.
  Therefore, $\forall (i,j) \not \in E_c$, one of the following three conditions holds: (1) $x_i = 0$, $x_j \neq 0$, (2) $x_i \neq 0$, $x_j = 0$, (3) $x_i = 0$, $x_j = 0$.
  It follows also from Lemma~\ref{lemma:lem1} that the vertices of $G_c$ can be grouped into $n$ partitions such that the $n$ vertices that belong to each partition are not connected by an edge.
  Therefore, there can be at most one vertex $i$ from each partition for which $x_i \neq 0$.
  It thus turns out that we can have at most $n$ vertices for which $x_i \neq 0$.
  From these $n$ vertices, we can obtain $n(n-1)$ (ordered) pairs of vertices.
  For each pair, $X_{i,j} = x_i x_j \leq 1$ holds.
  Therefore, it follows that:
  \begin{equation*}
    \mathrm{trace}(JX) = \sum_{i=1}^{m} \sum_{j=1}^{m} J_{i,j} X_{i,j} \leq n + n(n-1) = n^2
  \end{equation*}
  This proves the Lemma.
\end{proof}

\begin{theorem}
  If $G_c$ contains a clique of order $n$, problem~(\ref{eq:opt1}) can attain its largest possible value.
\end{theorem}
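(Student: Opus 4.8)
The plan is to exhibit an explicit feasible $X$ that achieves the bound $n^2$ established in Lemma~\ref{lemma:lem2}, thereby showing the optimum is attained. Since $G_c$ contains a clique of order $n$, let $(v_1,u_1),\ldots,(v_n,u_n)\in V_c$ be the vertices forming that clique; by Theorem~1 these correspond to an isomorphism $\phi(v_i)=u_i$, and in particular each partition $P_i$ from Lemma~\ref{lemma:lem1} contributes exactly one clique vertex. First I would define the indicator vector $x\in\mathbb{R}^m$ by $x_k=1$ if vertex $k$ is one of the $n$ clique vertices and $x_k=0$ otherwise, and set $X=xx^\top$. By construction $X\in S_m$, $X\succeq 0$, and $\mathrm{rank}(X)=1$ (it is a nonzero rank-one matrix), so constraints~(\ref{eq:opt1_con4}) and~(\ref{eq:opt1_con5}) hold immediately.

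Next I would verify the remaining three constraints. For~(\ref{eq:opt1_con1}): $\mathrm{trace}(X)=\sum_k x_k^2=\sum_k x_k=n$ since exactly $n$ entries of $x$ equal $1$. For~(\ref{eq:opt1_con3}): whenever $(i,j)\in E_c$ we have $X_{i,j}=x_ix_j\in\{0,1\}$, so certainly $X_{i,j}\le 1$. The one point needing a short argument is~(\ref{eq:opt1_con2}): we must check $X_{i,j}=0$ for every non-edge $(i,j)\notin E_c$. The only way $X_{i,j}=x_ix_j$ could be nonzero is if both $i$ and $j$ are clique vertices; but any two distinct clique vertices are adjacent in $G_c$, so $(i,j)\in E_c$, a contradiction. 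Hence $X_{i,j}=0$ on all non-edges, and $X$ is feasible.

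Finally I would evaluate the objective: $\mathrm{trace}(JX)=\sum_{i,j}x_ix_j=\big(\sum_i x_i\big)^2=n^2$. Combined with Lemma~\ref{lemma:lem2}, which says no feasible point exceeds $n^2$, this $X$ is optimal and the optimal value equals $n^2$, proving the theorem.

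I do not anticipate a serious obstacle here; the argument is essentially a feasibility check for a natural candidate solution. The only subtlety worth stating carefully is the verification of~(\ref{eq:opt1_con2}), which relies on the clique vertices being pairwise adjacent (so that a nonzero off-diagonal entry of $X$ can only occur on an edge of $G_c$). One should also note explicitly that the diagonal constraint is consistent with the partition structure of Lemma~\ref{lemma:lem1}: the $n$ clique vertices are spread one per partition, which is exactly what the proof of Lemma~\ref{lemma:lem2} identified as the extremal configuration, so it is no accident that this $X$ meets the bound with equality.
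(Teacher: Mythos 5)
Your proposal is correct and follows essentially the same route as the paper: take the indicator vector of the $n$-clique, set $X = xx^\top$, check feasibility, and observe that the objective equals $n + n(n-1) = n^2$, which matches the upper bound of Lemma~\ref{lemma:lem2}. Your verification of constraint~(\ref{eq:opt1_con2}) is in fact slightly more explicit than the paper's, which merely asserts that $X_{i,j} \neq 0$ only for $i,j \in S$.
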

\begin{proof}
  Let us assume that $G_c$ contains a clique of order $n$, and let $S$ be the set of vertices that form the $n$-clique.
  We set $x_i = 1$ for $i \in S$, and $x_j = 0$ for $j \in V_c \setminus S$.
  Clearly, since $J_{i,j}=1$ for each $i,j \in S$, there are $n(n-1)$ (ordered) pairs of vertices, and for each of those pairs, we have that $x_i x_j = 1$. 
  Therefore, the objective value of problem~(\ref{eq:opt1}) is equal to:
  \begin{equation*}
    \mathrm{trace}(JX) = n+n(n-1) = n^2
  \end{equation*}
  By Lemma~\ref{lemma:lem2}, the objective value is equal to the largest possible value of problem~(\ref{eq:opt1}).
  Furthermore, all the constraints hold.
  Specifically, $\mathrm{trace}(X) = |S| = n$, and $X_{i,j} \neq 0$ only if $i,j \in S$ (with $i \neq j$).
  Moreover, since matrix $X$ corresponds to the outer product of a vector and itself, it is symmetric, positive semidefinite and of rank $1$.
\end{proof}

\begin{theorem}
  If $G_c$ contains no clique of order $n$, the optimal solution of problem~(\ref{eq:opt1}) takes some value no greater than $n + (n-1)(n-2)$.
\end{theorem}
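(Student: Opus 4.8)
The plan is to argue along the lines of the proof of Lemma~\ref{lemma:lem2}, but to sharpen the key structural fact about the support of $x$. Since constraint~(\ref{eq:opt1_con5}) forces $\mathrm{rank}(X)=1$, I would again write $X = xx^\top$ and set $T = \{ i \in V_c : x_i \neq 0 \}$. The crucial observation is that $T$ induces a \emph{clique} of $G_c$: if $i,j \in T$ with $i \neq j$ and $(i,j) \notin E_c$, then constraint~(\ref{eq:opt1_con2}) would force $X_{i,j} = x_i x_j = 0$, contradicting $x_i \neq 0$ and $x_j \neq 0$. Hence every two distinct vertices of $T$ are joined by an edge. (This also recovers, via Lemma~\ref{lemma:lem1}, the bound $|T| \leq n$ used in the proof of Lemma~\ref{lemma:lem2}, since a clique of $G_c$ meets each partition in at most one vertex.)

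Next I would invoke the hypothesis of the theorem: since $G_c$ contains no clique of order $n$ and $T$ is a clique, we must have $|T| \leq n-1$. As in the proof of Lemma~\ref{lemma:lem2}, the objective can be rewritten as
\begin{equation*}
  \mathrm{trace}(JX) \;=\; \mathrm{trace}(X) + \sum_{\substack{i,j \in T \\ i \neq j}} x_i x_j \;=\; n + \sum_{\substack{i,j \in T \\ i \neq j}} x_i x_j ,
\end{equation*}
using constraint~(\ref{eq:opt1_con1}) for the first term. The remaining sum has at most $|T|(|T|-1) \leq (n-1)(n-2)$ ordered pairs; moreover each such pair $(i,j)$ satisfies $(i,j) \in E_c$ because $T$ is a clique, so constraint~(\ref{eq:opt1_con3}) gives $x_i x_j = X_{i,j} \leq 1$. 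Adding these inequalities yields $\mathrm{trace}(JX) \leq n + (n-1)(n-2)$, which is the desired bound; here I use that $t \mapsto t(t-1)$ is nondecreasing for $t \geq 1$, so replacing $|T|$ by $n-1$ is legitimate.

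I do not anticipate a genuine difficulty: the whole argument is essentially a one-line refinement of Lemma~\ref{lemma:lem2}, and the only step requiring attention is the clique observation, which is precisely what allows the ``no $n$-clique'' hypothesis to be used. It is also worth noting explicitly that we never need to exhibit a feasible point with $|T| = n-1$; we are only bounding the optimal value over the feasible region, so an upper bound on $\sum_{i \neq j} x_i x_j$ over all feasible $X$ is all that is required, irrespective of whether the trace constraint~(\ref{eq:opt1_con1}) can actually be met with so few nonzero coordinates.
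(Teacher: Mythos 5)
Your proof is correct and follows essentially the same route as the paper: both arguments reduce to showing that the support of $x$ has at most $n-1$ elements (the paper phrases this as ``every $n$-subset contains a non-edge, so some coordinate must vanish,'' while you state directly that the support induces a clique) and then bound the at most $(n-1)(n-2)$ off-diagonal terms by $1$ each via constraints~(\ref{eq:opt1_con2}) and~(\ref{eq:opt1_con3}). Your formulation via the clique structure of the support is a slightly cleaner packaging of the identical idea, so there is nothing further to reconcile.
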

\begin{proof}
  As described above, Lemma~\ref{lemma:lem1} implies that we can have at most $n$ vertices for which $x \neq 0$.
  Furthermore, since $G_c$ does not contain any $n$-clique, for each subset of vertices $S \subset V_c$ with $|S| = n$, there is at least one pair of vertices $i,j$ for which $(i,j) \not \in E_c$.
  Then, to satisfy the condition that $X_{i,j} = 0$ if $(i,j) \not \in E_c$, either $x_i = 0$ or $x_j = 0$.
  Hence, there can be at most $n-1$ vertices for which $x \neq 0$.
  Then, the objective value of the problem is equal to:
  \begin{equation*}
    \begin{split}
      \mathrm{trace}(JX) &= x_1 x_1 + \ldots + x_m x_m \\
      &= n + x_1 x_2 + \ldots + x_m x_{m-1} \\
      &\leq n + (n-1)(n-2) \\
      &< n^2
    \end{split}
  \end{equation*}
  The first inequality follows from the fact that $x_i x_j \leq 1$ for any $i,j$ with $i \neq j$ (from constraint~(\ref{eq:opt1_con3})), and as mentioned above there are at most $n-1$ vertices for which $x \neq 0$.
\end{proof}

\subsection{General Case}
Rank constraints in semidefinite programs are usually hard.
Specifically, it turns out that problem~(\ref{eq:opt1}) is a nonconvex problem in $X \in S_m$.
If we simply drop the rank constraint, we obtain the following relaxation:
\begin{subequations}
  \begin{align}
    \underset{X}{\text{maximize}} \quad &\mathrm{trace}(JX) \label{eq:opt2_obj} \\
    \text{subject to} \quad &\mathrm{trace}(X) = n, \label{eq:opt2_con1} \\
    &X_{i,j} = 0, \; (i,j) \not \in E_c, \label{eq:opt2_con2} \\
    &X_{i,j} \leq 1, \; (i,j) \in E_c, \label{eq:opt2_con3} \\
    &X \succeq 0. \label{eq:opt2_con4}
  \end{align}
  \label{eq:opt2}
\end{subequations}
which is a semidefinite program in $X \in S_m$.
Clearly, the optimal value of problem~(\ref{eq:opt2}) is an upper bound of the optimal value of problem~(\ref{eq:opt1}).

We will next show that in case $G_c$ contains one or more cliques of order $n$, then the optimal value of problem~(\ref{eq:opt2}) is equal to the optimal value of problem~(\ref{eq:opt1}) (\ie equal to $n^2$).
Furthermore, we will show that if $G_c$ contains no cliques, then the optimal value of the problem is not greater than $n(n-1)$.

Let $v_i \in \mathbb{R}^d$ be the vector representation of vertex $i \in V_c$, and $U = [v_1,\ldots,v_m]^\top \in \mathbb{R}^{m \times d}$ a matrix whose $i^{th}$ row contains the vector representation of vertex $i \in V_c$.
Then, $X = UU^\top$.

\begin{lemma}
  The value of problem~(\ref{eq:opt2}) is no greater than $n^2$.
  \label{lemma:lem3}
\end{lemma}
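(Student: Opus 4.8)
The strategy is to mimic the proof of Lemma~\ref{lemma:lem2}, but working with the vector representations $v_1,\ldots,v_m \in \mathbb{R}^d$ instead of scalars $x_i$, replacing the scalar products $x_i x_j$ by inner products $\langle v_i, v_j \rangle$ and the absolute value bound $|x_i| \le 1$ by the norm bound $\|v_i\| \le 1$. First I would expand the objective as $\mathrm{trace}(JX) = \sum_{i,j} X_{i,j} = \sum_{i,j} \langle v_i, v_j\rangle$, and split off the diagonal: by constraint~(\ref{eq:opt2_con1}) the diagonal contributes $\sum_i \|v_i\|^2 = n$, so $\mathrm{trace}(JX) = n + \sum_{i \neq j} \langle v_i, v_j\rangle$.

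The next step is to control which off-diagonal terms can be nonzero. From constraint~(\ref{eq:opt2_con2}), whenever $(i,j) \notin E_c$ we have $X_{i,j} = \langle v_i, v_j\rangle = 0$. In particular, since each partition $P_k$ from Lemma~\ref{lemma:lem1} is an independent set in $G_c$, the vectors $\{v_i : i \in P_k\}$ are pairwise orthogonal. The key quantitative observation is then a norm bound: $n = \sum_{i} \|v_i\|^2 = \sum_{k=1}^{n} \sum_{i \in P_k} \|v_i\|^2$, so the average partition-block has squared-norm-sum $1$; but I actually want a bound on $\sum_{k}\big\|\sum_{i\in P_k} v_i\big\|^2$, and by orthogonality within each block this equals $\sum_k \sum_{i \in P_k}\|v_i\|^2 = n$. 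Writing $w_k = \sum_{i \in P_k} v_i$, we get $\sum_{i,j} \langle v_i,v_j\rangle = \big\langle \sum_k w_k, \sum_\ell w_\ell \big\rangle = \big\| \sum_{k=1}^n w_k \big\|^2 \le \big(\sum_{k=1}^n \|w_k\|\big)^2 \le n \sum_{k=1}^n \|w_k\|^2 = n \cdot n = n^2$, where the second inequality is Cauchy--Schwarz on the $n$-vector $(\|w_1\|,\ldots,\|w_n\|)$. This gives $\mathrm{trace}(JX) \le n^2$ directly, in fact without even invoking constraint~(\ref{eq:opt2_con3}).

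I would present the argument in that order: expand the objective; introduce $w_k$; observe orthogonality within blocks gives $\sum_k \|w_k\|^2 = n$; apply Cauchy--Schwarz to conclude. The one subtlety to get right — and the step I expect to be the main obstacle — is justifying the orthogonality of $\{v_i : i \in P_k\}$ cleanly, i.e.\ that $P_k$ being an independent set forces $\langle v_i, v_j\rangle = 0$ for distinct $i,j \in P_k$ via constraint~(\ref{eq:opt2_con2}), and hence that $\|w_k\|^2$ telescopes to $\sum_{i \in P_k}\|v_i\|^2$ with no cross terms. Everything else is a direct computation. As a side remark, one can note that this bound is consistent with Lemma~\ref{lemma:lem2}: dropping the rank constraint cannot increase the optimum above $n^2$, matching the claim that problem~(\ref{eq:opt2}) upper-bounds problem~(\ref{eq:opt1}).
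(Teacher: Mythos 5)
Your proposal is correct, and its core is the same as the paper's: expand $\mathrm{trace}(JX)$ via the Gram vectors, use Lemma~\ref{lemma:lem1} together with constraint~(\ref{eq:opt2_con2}) to get pairwise orthogonality within each partition $P_k$, form the block sums $w_k=\sum_{i\in P_k}v_i$, and deduce $\sum_k\|w_k\|^2=\mathrm{trace}(X)=n$. Where you diverge is only in the last step: the paper bounds each cross term $\langle w_k,w_\ell\rangle$ by Cauchy--Schwarz and then applies the AM--GM inequality to every ordered pair, finishing with a somewhat laborious telescoping computation $n+\tfrac{(n-2)n+n^2}{2}=n^2$; you instead observe that the entire objective is $\bigl\|\sum_k w_k\bigr\|^2$ and apply the triangle inequality followed by one Cauchy--Schwarz on the vector of norms, giving $\bigl(\sum_k\|w_k\|\bigr)^2\le n\sum_k\|w_k\|^2=n^2$ in one line. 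The two endings are equivalent in content (expanding $(\sum_k\|w_k\|)^2$ and applying AM--GM termwise recovers the paper's chain), but yours is cleaner, and your remark that constraint~(\ref{eq:opt2_con3}) is never needed is accurate for the paper's proof as well. The one point you flag as a worry---that independence of $P_k$ forces $\langle v_i,v_j\rangle=0$ for distinct $i,j\in P_k$---is immediate from $X_{i,j}=\langle v_i,v_j\rangle$ and constraint~(\ref{eq:opt2_con2}), so there is no gap.
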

\begin{proof}
  The objective value of problem~(\ref{eq:opt2}) is equal to:
  \begin{equation*}
    \begin{split}
      \mathrm{trace}(JX) = \sum_{i=1}^{m} \sum_{j=1}^{m} J_{i,j} X_{i,j} &= J_{1,1} \, X_{1,1} + \ldots + J_{m,m} \, X_{m,m} \\
      &= \langle v_1, v_1 \rangle + \ldots + \langle v_m, v_m \rangle
    \end{split}
  \end{equation*}
  From constraint~(\ref{eq:opt2_con1}), it follows that the sum of the diagonal terms of matrix $X$ is equal to $n$, \ie $\langle v_1, v_1 \rangle + \langle v_2, v_2 \rangle + \ldots + \langle v_m, v_m \rangle = n$.
  We next replace these terms with their sum.
  Therefore, the objective function of the optimization problem becomes:
  \begin{equation}
    \mathrm{trace}(JX) = \sum_{i=1}^{m} \sum_{j=1}^{m} J_{i,j} X_{i,j} = n+ \langle v_1, v_2 \rangle + \ldots + \langle v_m, v_{m-1} \rangle
    \label{eq:obj1}
  \end{equation}
  From Lemma~\ref{lemma:lem1}, it follows that the set of vertices $V_c$ can be decomposed into $n$ disjoint sets $P_1, P_2, \ldots, P_n$ such that there is no edge that connects vertices of the same set.
  Then, from constraint~(\ref{eq:opt2_con2}), it follows that the vector representations of vertices that belong to the same partition are pairwise orthogonal.
  Let $v_1, v_2, \ldots, v_n$ denote the vector representations of the vertices that belong to partition $P_1$, $v_{n+1}, v_{n+2}, \ldots, v_{2n}$ denote the representations of the vertices that belong to partition $P_2$, and so on.

  Let $w_1, w_2,\ldots,w_n$ be the sum of the representations of the vertices that belong to partitions $P_1, P_2,\ldots,P_n$, respectively.
  Then, $w_1 = v_1 + v_2 + \ldots + v_n$, $w_2 = v_{n+1} + v_{n+2} + \ldots + v_{2n}$, and so on.
  Given two partitions (\eg partitions $P_1$ and $P_2$), the sum of the inner products of all pairs of vectors is equal to:
  \begin{equation}
    \begin{split}
      \langle w_1, w_2 \rangle &= \langle v_{1}+v_{2}+\ldots+v_{n}, v_{n+1}+v_{n+2}+\ldots+v_{2n} \rangle \\
      &= \langle v_{1}, v_{n+1} \rangle + \langle v_{1}, v_{n+2} \rangle + \ldots + \langle v_{n}, v_{2n} \rangle
    \end{split}
    \label{eq:sum1}
  \end{equation}
  From the Cauchy-Schwarz inequality, it is known that $\langle v, u\rangle \leq \sqrt{\langle v,v \rangle \langle u,u \rangle}$.
  Hence, fow two partitions (\eg partitions $P_1$ and $P_2$), we have that:
  \begin{equation}
    \begin{split}
      \langle w_1, w_2 \rangle &\leq \sqrt{\langle w_1, w_1 \rangle \langle w_2, w_2 \rangle} \\
      &= \sqrt{||w_1||^2 ||w_2||^2}
    \end{split}
    \label{eq:cauchy_schwartz}
  \end{equation}
  Now, since the inner product of each pair of vectors of a partition is equal to $0$, for each partition (\eg partition $P_1$), we have that:
  \begin{equation*}
    ||w_1||^2 = \langle w_1, w_1 \rangle = \langle v_{1}+\ldots+v_{n}, v_{1}+\ldots+v_{n} \rangle = \langle v_{1}, v_{1} \rangle + \ldots + \langle v_{n}, v_{n} \rangle
  \end{equation*}
  Therefore, from constraint~(\ref{eq:opt2_con1}), it follows that:
  \begin{equation}
    ||w_1||^2+\ldots+||w_n||^2 = n
    \label{eq:diag}
  \end{equation}
  Now, from Equation~(\ref{eq:obj1}), we have:
  \begin{equation*}
    \begin{split}
      \mathrm{trace}(JX) &= n + \langle v_1, v_2 \rangle + \ldots + \langle v_m, v_{m-1} \rangle \\
      &= n + \langle w_1, w_2 \rangle + \ldots + \langle w_n, w_{n-1} \rangle \\
      &\leq n + \sqrt{||w_1||^2 ||w_2||^2} + \ldots + \sqrt{||w_n||^2 ||w_{n-1}||^2} \\
      &\leq n + \frac{||w_1||^2 + ||w_2||^2}{2} + \ldots + \frac{||w_n||^2 ||w_{n-1}||^2}{2} \\
      &= n + \frac{(n-1)||w_1||^2+(||w_2||^2+\ldots+||w_n||^2)}{2} + \ldots + \frac{(n-1)||w_n||^2+(||w_1||^2+\ldots+||w_{n-1}||^2)}{2} \\
      &= n + \frac{(n-1)||w_1||^2+(n-||w_1||^2)}{2} + \ldots + \frac{(n-1)||w_n||^2+(n-||w_n||^2)}{2} \\ 
      &= n + \frac{(n-2)||w_1||^2+n}{2} + \ldots + \frac{(n-2)||w_n||^2+n}{2} \\ 
      &= n + \frac{(n-2)(||w_1||^2+\ldots+||w_n||^2)+n^2}{2} \\ 
      &= n + \frac{(n-2)n+n^2}{2} \\ 
      &= n + \frac{2n^2-2n}{2} \\ 
      &= n^2
    \end{split}
  \end{equation*}
  The second equality follows from Equation~(\ref{eq:sum1}), and the first inequality from Equation~(\ref{eq:cauchy_schwartz}).
  The last inequality follows from the well-know inequality of arithmetic and geometric means, while the fourth equality follows from Equation~(\ref{eq:diag}).
\end{proof}

Clearly, a solution to problem~(\ref{eq:opt1}) is also a solution to problem~(\ref{eq:opt2}).
Hence, if the compatibility graph $G_c$ contains a clique of order $n$, the rank-$1$ solution that was presented above is also an optimal solution in this case.
However, if $G_c$ contains multiple cliques of order $n$, there is an optimal solution of higher rank as shown below.

\begin{figure*}[t]
  \centering
  \begin{minipage}{.5\textwidth}
    \centering
    \includegraphics[width=\textwidth]{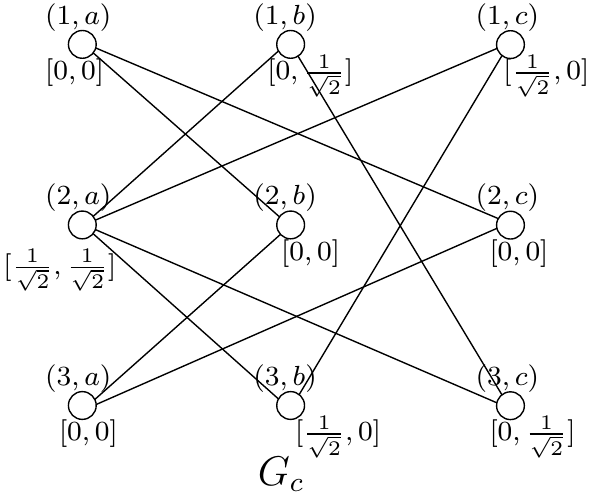}
  \end{minipage}
  \begin{minipage}{\textwidth}
  \centering
  \vspace{.2cm}
  {\footnotesize
    \[
    J X = 
    \begin{bmatrix}
      1 & 1 & 1 & 1 & 1 & 1 & 1 & 1 & 1 \\
      1 & 1 & 1 & 1 & 1 & 1 & 1 & 1 & 1 \\
      1 & 1 & 1 & 1 & 1 & 1 & 1 & 1 & 1 \\
      1 & 1 & 1 & 1 & 1 & 1 & 1 & 1 & 1 \\
      1 & 1 & 1 & 1 & 1 & 1 & 1 & 1 & 1 \\
      1 & 1 & 1 & 1 & 1 & 1 & 1 & 1 & 1 \\
      1 & 1 & 1 & 1 & 1 & 1 & 1 & 1 & 1 \\
      1 & 1 & 1 & 1 & 1 & 1 & 1 & 1 & 1 \\
      1 & 1 & 1 & 1 & 1 & 1 & 1 & 1 & 1 \\
    \end{bmatrix}
    \begin{bmatrix}
      0 & 0 & 0 & 0 & 0 & 0 & 0 & 0 & 0 \\
      0 & \nicefrac{1}{2} & 0 & \nicefrac{1}{2} & 0 & 0 & 0 & 0 & \nicefrac{1}{2} \\
      0 & 0 & \nicefrac{1}{2} & \nicefrac{1}{2} & 0 & 0 & 0 & \nicefrac{1}{2} & 0 \\
      0 & \nicefrac{1}{2} & \nicefrac{1}{2} & 1 & 0 & 0 & 0 & \nicefrac{1}{2} & \nicefrac{1}{2} \\
      0 & 0 & 0 & 0 & 0 & 0 & 0 & 0 & 0 \\
      0 & 0 & 0 & 0 & 0 & 0 & 0 & 0 & 0 \\
      0 & 0 & 0 & 0 & 0 & 0 & 0 & 0 & 0 \\
      0 & 0 & \nicefrac{1}{2} & \nicefrac{1}{2} & 0 & 0 & 0 & \nicefrac{1}{2} & 0 \\
      0 & \nicefrac{1}{2} & 0 & \nicefrac{1}{2} & 0 & 0 & 0 & 0 & \nicefrac{1}{2} \\
    \end{bmatrix}
    =
    \begin{bmatrix}
      0 & \nicefrac{3}{2} & \nicefrac{3}{2} & 3 & 0 & 0 & 0 & \nicefrac{3}{2} & \nicefrac{3}{2} \\
      0 & \nicefrac{3}{2} & \nicefrac{3}{2} & 3 & 0 & 0 & 0 & \nicefrac{3}{2} & \nicefrac{3}{2} \\
      0 & \nicefrac{3}{2} & \nicefrac{3}{2} & 3 & 0 & 0 & 0 & \nicefrac{3}{2} & \nicefrac{3}{2} \\
      0 & \nicefrac{3}{2} & \nicefrac{3}{2} & 3 & 0 & 0 & 0 & \nicefrac{3}{2} & \nicefrac{3}{2} \\
      0 & \nicefrac{3}{2} & \nicefrac{3}{2} & 3 & 0 & 0 & 0 & \nicefrac{3}{2} & \nicefrac{3}{2} \\
      0 & \nicefrac{3}{2} & \nicefrac{3}{2} & 3 & 0 & 0 & 0 & \nicefrac{3}{2} & \nicefrac{3}{2} \\
      0 & \nicefrac{3}{2} & \nicefrac{3}{2} & 3 & 0 & 0 & 0 & \nicefrac{3}{2} & \nicefrac{3}{2} \\
      0 & \nicefrac{3}{2} & \nicefrac{3}{2} & 3 & 0 & 0 & 0 & \nicefrac{3}{2} & \nicefrac{3}{2} \\
      0 & \nicefrac{3}{2} & \nicefrac{3}{2} & 3 & 0 & 0 & 0 & \nicefrac{3}{2} & \nicefrac{3}{2} \\
    \end{bmatrix} 
    \]\\
    \vspace{.2cm}
    $\mathrm{trace}(JX) = 9$
    }
  \end{minipage}
  \caption{An optimal solution for the compatibility graph of Figure~\ref{fig:compatibility_graph} (top). There are two cliques of order $3$ in this graph. Therefore, we assign a $2$-dimensional vector to each vertex. Objective value of the constructed solution (bottom). The $i,j^{th}$ component of matrix $X$ corresponds to the inner product of the representations of vertices $i$ and $j$.}
  \label{fig:representations}
\end{figure*}

\begin{theorem}
  If $G_c$ contains $d$ cliques of order $n$, we can construct an optimal solution to problem~(\ref{eq:opt2}) as follows: We assign a vector $v_i \in \mathbb{R}^d$ to each vertex $i \in V_c$.
  We consider a feaure space with a feature corresponding to each one of the $d$ cliques.
  If vertex $i$ participates in the $j^{th}$ clique, then the $j^{th}$ component of $v_i$ is set equal to $\frac{1}{\sqrt{d}}$.
\end{theorem}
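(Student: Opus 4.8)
The plan is to exhibit the matrix $X = UU^\top$ associated with the proposed vectors and to verify two things: (i) that $X$ is feasible for problem~(\ref{eq:opt2}), and (ii) that $\mathrm{trace}(JX) = n^2$. Since Lemma~\ref{lemma:lem3} shows that no feasible $X$ can exceed $n^2$, point (ii) immediately certifies that the constructed solution is optimal. Throughout, let $C_1, \dots, C_d$ denote the $d$ cliques of order $n$ in $G_c$ (so implicitly $d \ge 1$), let $c_i$ be the number of these cliques containing vertex $i \in V_c$, and let $c_{ij}$ be the number of them containing both $i$ and $j$; by construction $\langle v_i, v_j \rangle = c_{ij}/d$ and $\langle v_i, v_i \rangle = c_i/d$.

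For feasibility I would check the four constraints in turn. Constraint~(\ref{eq:opt2_con4}) holds because $X = UU^\top$ is positive semidefinite by construction. For constraint~(\ref{eq:opt2_con1}), summing over $i \in V_c$ and swapping the order of summation gives $\sum_{i \in V_c} \langle v_i, v_i \rangle = \frac{1}{d}\sum_{i \in V_c} c_i = \frac{1}{d}\sum_{j=1}^{d} |C_j| = \frac{1}{d}\cdot d\cdot n = n$, since each clique has exactly $n$ vertices. For constraint~(\ref{eq:opt2_con2}), if $(i,j) \notin E_c$ then $i$ and $j$ are not both contained in any clique (a clique being a set of pairwise adjacent vertices), so $c_{ij} = 0$ and $\langle v_i, v_j \rangle = 0$. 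For constraint~(\ref{eq:opt2_con3}), if $(i,j) \in E_c$ then $c_{ij} \le d$, hence $\langle v_i, v_j \rangle = c_{ij}/d \le 1$.

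For the objective value I would use $\mathrm{trace}(JX) = \sum_{i,j} X_{i,j} = \sum_{i,j} \langle v_i, v_j \rangle = \big\langle \sum_{i \in V_c} v_i, \sum_{i \in V_c} v_i \big\rangle = \| \sum_{i \in V_c} v_i \|^2$. The $k$-th coordinate of $\sum_{i \in V_c} v_i$ receives a contribution of $\frac{1}{\sqrt d}$ from each of the $n$ vertices lying in clique $C_k$, and $0$ from all other vertices, so it equals $\frac{n}{\sqrt d}$. Therefore $\| \sum_{i \in V_c} v_i \|^2 = \sum_{k=1}^{d} \big( \frac{n}{\sqrt d} \big)^2 = d \cdot \frac{n^2}{d} = n^2$. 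Combined with Lemma~\ref{lemma:lem3}, this shows $X$ attains the maximum, so it is an optimal solution to problem~(\ref{eq:opt2}).

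The argument becomes routine once the incidence counts $c_i$ and $c_{ij}$ are introduced, so there is no deep obstacle; the step needing the most care is constraint~(\ref{eq:opt2_con2}). One must observe that the non-edges of $G_c$ include in particular all pairs inside the partitions $P_1, \dots, P_n$ of Lemma~\ref{lemma:lem1}, and argue uniformly that \emph{any} non-edge forces orthogonality of the two associated vectors, because two non-adjacent vertices can never co-occur in a clique. It is also worth noting explicitly that the $d = 2$ instance depicted in Figure~\ref{fig:representations} is exactly this construction, which both illustrates the assignment and serves as a consistency check on the coordinate bookkeeping above.
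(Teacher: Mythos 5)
Your proposal is correct and follows essentially the same route as the paper: verify feasibility of $X = UU^\top$ constraint by constraint (PSD by construction, trace $=n$ by counting clique memberships, orthogonality for non-edges since non-adjacent vertices cannot co-occur in a clique, and $X_{i,j}\le 1$ on edges), then show the objective equals $n^2$ and invoke the upper bound of Lemma~\ref{lemma:lem3}. Your bookkeeping via the incidence counts $c_i$, $c_{ij}$ and the identity $\mathrm{trace}(JX)=\bigl\|\sum_{i\in V_c} v_i\bigr\|^2$ is a slightly cleaner packaging of the paper's term-by-term count, but it is the same argument.
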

\begin{proof}
  If vertices $i,j \in V_c$ participate in all $d$ cliques of $G_c$, then $\langle v_i, v_j \rangle = 1$.
  Otherwise, $\langle v_i, v_j \rangle < 1$.
  From the above, it is clear that $X_{i,j} \leq 1$, $\forall (i,j) \in E_c$.
  If a vertex $i$ participates in no clique, then $v_i$ is a zero vector, and $\langle v_i, v_j \rangle = 0$, $\forall j \in V_c$.
  Furthermore, if two vertices $i$ and $j$ participate in one or more cliques, but $(i,j) \not \in E_c$, then the two vertices participate in different cliques and they have no common components taking nonzero values, and thus $\langle v_i, v_j \rangle = 0$.
  Therefore, $X_{i,j} = 0$, $\forall (i,j) \not \in E_c$.
  Furthermore, we have that:
  \begin{equation*}
    \mathrm{trace}(X) = \langle v_1, v_1 \rangle + \ldots + \langle v_m, v_m \rangle = \sum_{i=1}^d \big( v_1^i v_1^i + \ldots + v_m^i v_m^i \big)
  \end{equation*}
  where $v^i$ denotes the $i^{th}$ component of vector $v$.
  Since each clique contains $n$ vertices, there are $n$ vectors whose $i^{th}$ component is nonzero and equal to $\frac{1}{\sqrt{d}}$.
  Hence, it follows that:
  \begin{equation*}
    \mathrm{trace}(X) = \sum_{i=1}^d n \frac{1}{d} = n
  \end{equation*}
  Hence, all the constraints are satisfied.
  The objective value of the solution is equal to:
  \begin{equation*}
    \begin{split}
      \mathrm{trace}(JX) = \sum_{i=1}^{m} \sum_{j=1}^{m} J_{i,j} X_{i,j} &= J_{1,1} \, \langle v_1, v_1 \rangle + \ldots + J_{m,m} \, \langle v_m, v_m \rangle \\
      &= n + \sum_{i=1}^d \big( v_1^i v_2^i + \ldots + v_m^i v_{m-1}^i \big)
    \end{split}
  \end{equation*}
  Since each clique contains $n$ vertices, there are $n(n-1)$ pairs of vertices $i,j \in V_c$ such that $(i,j) \in E_c$.
  Furthermore, since these vertices participate in a clique, their component that corresponds to that clique is equal to $\frac{1}{\sqrt{d}}$, and therefore, the product of these components is equal to $\frac{1}{d}$.
  Since there are $n(n-1)$ such pairs for each clique, it follows that:
  \begin{equation*}
      \mathrm{trace}(JX) = n + \sum_{i=1}^d n(n-1)\frac{1}{d} = n + n(n-1) = n^2
    \end{equation*}
  This concludes the proof.
\end{proof}

Fugure~\ref{fig:representations} illustrates how an optimal solution is constructed for the compatibility graph of Figure~\ref{fig:compatibility_graph}.
There are two cliques of order $3$ in this graph.
Therefore, we assign a $2$-dimensional vector to each of its vertices.

\begin{theorem}
  If $G_c$ contains no clique of order $n$, the optimal solution of problem~(\ref{eq:opt2}) takes some value no greater than $n(n-1)$.
\end{theorem}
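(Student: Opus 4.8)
The plan is to imitate, inside the vector framework already used for Lemma~\ref{lemma:lem3}, the argument that gave the bound $n+(n-1)(n-2)$ for problem~(\ref{eq:opt1}). Writing $v_i$ for the vector representation of vertex $i\in V_c$, $X_{ij}=\langle v_i,v_j\rangle$, and $w_k=\sum_{i\in P_k}v_i$ for the sum over the $k$-th partition of Lemma~\ref{lemma:lem1}, one has $\mathrm{trace}(JX)=\|\sum_i v_i\|^2=\|\sum_{k=1}^n w_k\|^2$; within a partition the $v_i$ are pairwise orthogonal by constraint~(\ref{eq:opt2_con2}), so $\|w_k\|^2=\sum_{i\in P_k}\|v_i\|^2$ and hence $\sum_k\|w_k\|^2=\mathrm{trace}(X)=n$. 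The first step is the elementary estimate
\begin{equation*}
  \mathrm{trace}(JX)=\Big\|\sum_k w_k\Big\|^2\le\Big(\sum_k\|w_k\|\Big)^2\le t\sum_k\|w_k\|^2=t\,n,
\end{equation*}
where $t$ denotes the number of partitions with $w_k\neq 0$ (Cauchy--Schwarz applied to the $t$ nonzero terms). Thus it suffices to prove that $t\le n-1$.

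The decisive step is to derive $t\le n-1$ from the hypothesis $\omega(G_c)<n$. In the rank-$1$ setting this is immediate: $x_ix_j=X_{ij}=0$ for every non-edge $(i,j)$ of $G_c$ forces any two vertices with $x_i,x_j\neq 0$ to be adjacent, so the support of $x$ is a clique and therefore meets at most $n-1$ of the partitions. I would try to recover an analogue: from an optimal $X$, project every $v_i$ onto the direction $\hat y=\bigl(\sum_i v_i\bigr)/\|\sum_i v_i\|$ and, in each active partition, retain the vertex whose projection onto $\hat y$ is largest; I would then try to show, using constraints~(\ref{eq:opt2_con2})--(\ref{eq:opt2_con3}) together with $\sum_k\|w_k\|^2=n$, that these retained vertices are pairwise adjacent in $G_c$, so that having $n$ active partitions would produce an $n$-clique and contradict the hypothesis.

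The hard part is exactly this rounding step, and it is the step I would scrutinize most carefully: in the relaxation, constraint~(\ref{eq:opt2_con2}) only makes the representations of non-adjacent vertices \emph{orthogonal}, not zero, so the support of an optimal solution need not be a clique and the clean counting of the rank-$1$ case does not transfer. One is therefore forced to use the upper bounds $X_{ij}\le 1$ of~(\ref{eq:opt2_con3}) quantitatively, e.g.\ to show that a value of $\mathrm{trace}(JX)$ close to $n^2$ would make the chosen representatives nearly collinear with $\hat y$ and of near-unit length, which would in turn force $\langle v_i,v_j\rangle$ to be close to $1$ on each retained pair and hence these pairs to be edges. An alternative route I would pursue in parallel is to construct, for every $G_c$ with $\omega(G_c)<n$, an explicit dual feasible solution of the semidefinite program~(\ref{eq:opt2}) of value $n(n-1)$ — essentially a fractional clique cover of $G_c$ of total weight $n-1$ turned into a matrix certificate — which would yield the bound without any rounding. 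Building such a certificate, and in particular pinning down that the correct threshold is exactly $n(n-1)$ rather than something larger, is the main obstacle either way.
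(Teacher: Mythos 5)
Your opening reduction is correct as far as it goes: $\mathrm{trace}(JX)=\bigl\|\sum_k w_k\bigr\|^2\le t\sum_k\|w_k\|^2=tn$, and $t\le n-1$ would finish the proof. But the reduction is a dead end, because $t\le n-1$ is simply not a consequence of $\omega(G_c)<n$: nothing forces an optimal (or even a good feasible) solution of problem~(\ref{eq:opt2}) to vanish on an entire partition of Lemma~\ref{lemma:lem1}. The strictly feasible point $X=\tfrac{1}{n}I$ already has all $n$ partitions active, and when $G_c$ has no $n$-clique but contains large cliques spread over all partitions (e.g.\ the compatibility graph of $K_3$ and $P_3$), symmetry makes it clear that optimizers keep every $w_k\neq 0$, so $t=n$ and your bound degenerates to $n^2$. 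You correctly diagnose that the rank-$1$ counting (``the support is a clique'') does not transfer because constraint~(\ref{eq:opt2_con2}) only forces orthogonality, but neither of your escape routes is carried out, and the projection/rounding one fails for exactly the reason you name: two orthogonal vectors can both have large positive projections on $\hat y$, so the retained representatives need not be pairwise adjacent.

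The paper's resolution is a \emph{per-coordinate} version of the rank-$1$ counting rather than a global one. Writing $X=UU^\top$ with rows $v_i\in\mathbb{R}^d$ and $w_k^i$ for the $i$-th coordinate of $w_k$, the objective is decomposed as $\mathrm{trace}(JX)=n+\sum_{i=1}^d z^i$ with $z^i=\sum_{j\neq k}w_j^i w_k^i$. The key claim is that for each fixed coordinate $i$ at most $n-1$ of the numbers $w_1^i,\dots,w_n^i$ are nonzero --- otherwise one picks a vertex with nonzero $i$-th coordinate in each partition and argues these $n$ vertices are pairwise adjacent, contradicting $\omega(G_c)<n$ --- after which $ab\le\tfrac{a^2+b^2}{2}$ applied termwise (each product involving the vanishing index drops out) gives $z^i\le(n-2)\sum_k(w_k^i)^2$, and summing over $i$ yields $n+(n-2)\,\mathrm{trace}(X)=n(n-1)$ by constraint~(\ref{eq:opt2_con1}). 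So the idea your proposal is missing is to run the counting argument coordinate by coordinate inside the Gram factorization, where each single coordinate behaves like a rank-$1$ solution, rather than on the partition sums $w_k$ as whole vectors. (Be aware, though, that the paper's step ``$n$ vertices with nonzero $i$-th coordinate are pairwise adjacent'' tacitly infers nonzero inner products from individually nonzero coordinates, which is itself a questionable inference; your instinct to scrutinize precisely this transfer from the rank-$1$ case was well placed.)
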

\begin{proof}
  From the Cauchy-Schwarz inequality, it is known that $\langle v, u \rangle \leq \sqrt{\langle v,v \rangle \langle u,u \rangle}$.
  Furthermore, for nonzero vectors, equality holds if and only if $v$ and $u$ are linearly dependent (\ie either $v$ is a multiple of $u$ or the opposite).
  As shown above, by applying the Cauchy-Schwarz inequality, we obtain:
  \begin{equation*}
    \begin{split}
      \mathrm{trace}(JX) &= n + \langle v_1, v_2 \rangle + \ldots + \langle v_m, v_{m-1} \rangle \\
      &= n + \langle w_1, w_2 \rangle + \ldots + \langle w_n, w_{n-1} \rangle \\
      &\leq n + \sqrt{||w_1||^2 ||w_2||^2} + \ldots + \sqrt{||w_n||^2 ||w_{n-1}||^2} 
    \end{split}
  \end{equation*}
  Equality (between the last two quantities) holds if and only if the $n$ vectors $w_1, w_2, \ldots, w_n$ are pairwise linearly dependent.
  Let us assume that these vectors are pairwise linearly dependent.
  Then, without loss of generality, let us also assume that the $i^{th}$ component of the first vector $w_1$ is nonzero.
  Since the vectors are linearly dependent, then the $i^{th}$ component of all the other vectors $w_2,\ldots,w_n$ is also nonzero.
  Since each vector is equal to the sum of orthogonal vectors (sum of representations of vertices of each partition), for all $n$ partitions there exists some vector (corresponding to some vertex of that partition) whose $i^{th}$ component is nonzero.
  Then, there exist $n$ vectors whose $i^{th}$ component is nonzero, and hence the inner product of every pair of these vectors is nonzero.
  That means that every pair of these vertices is connected by an edge (from constraints~(\ref{eq:opt2_con2}) and ~(\ref{eq:opt2_con3})).
  Hence, these vertices form a clique of order $n$.
  We have reached a contradiction since $G_c$ does not contain a clique of order $n$.
  Therefore, the objective function is:
  \begin{equation*}
    \begin{split}
      \mathrm{trace}(JX) &= n + \langle v_1, v_2 \rangle + \ldots + \langle v_m, v_{m-1} \rangle \\
      &= n + \langle w_1, w_2 \rangle + \ldots + \langle w_n, w_{n-1} \rangle \\
      &< n + \sqrt{||w_1||^2 ||w_2||^2} + \ldots + \sqrt{||w_n||^2 ||w_{n-1}||^2} \\
      &\leq n + \frac{||w_1||^2 + ||w_2||^2}{2} + \ldots + \frac{||w_n||^2 ||w_{n-1}||^2}{2} \\
      &= n^2
    \end{split}
  \end{equation*}
  We will now establish an upper bound on the value of the objective function.
  We have that:
  \begin{equation*}
    \begin{split}
      \mathrm{trace}(JX) &= n + \langle v_1, v_2 \rangle + \ldots + \langle v_m, v_{m-1} \rangle \\
      &= n + \langle w_1, w_2 \rangle + \ldots + \langle w_n, w_{n-1} \rangle \\
      &= n + \sum_{i=1}^d \big( w_1^i w_2^i + \ldots + w_n^i w_{n-1}^i \big)
    \end{split}
  \end{equation*}
  where $w^i$ denotes the $i^{th}$ component of vector $w$.
  We also have that:
  \begin{equation}
    \begin{split}
      \mathrm{trace}(X) &= \langle v_1, v_1 \rangle + \langle v_2, v_2 \rangle + \ldots + \langle v_m, v_m \rangle \\
      &= \langle w_1, w_1 \rangle + \langle w_2, w_2 \rangle + \ldots + \langle w_n, w_n \rangle \\
      &= \sum_{i=1}^d \big( w_1^i w_1^i + w_2^i w_2^i + \ldots + w_n^i w_n^i \big) = n
    \end{split}
    \label{eq:trace}
  \end{equation}
  Since $G_c$ contains no cliques of order $n$, for each component $i$ of $w_1, w_2, \ldots, w_n$, at most $n-1$ out of these $n$ vectors can have a nonzero value.
  Hence, for each $i \in \{1,2,\ldots,d\}$, at least one of the following $n$ components $w_1^i,w_2^i,\ldots,w_n^i$ is equal to zero.
  Therefore, for each $i$, at most $(n-1)(n-2)$ of the terms of the following summation are nonzero: 
  \begin{equation}
    \begin{split}
      \mathrm{trace}(JX) &= n + \sum_{i=1}^d \big( w_1^i w_2^i + \ldots + w_n^i w_{n-1}^i \big) \\
      &= n + \sum_{i=1}^d z^i
    \end{split}
    \label{eq:obj2}
  \end{equation}
  where $z^i$ denotes the contribution of the $i^{th}$ component to the above summation, \ie $z^i = w_1^i w_2^i + \ldots + w_n^i w_{n-1}^i$.
  Let us assume that for a given $i$, $w_j^i$ is equal to zero.
  Then, we have that:
  \begin{equation}
    \begin{split}
      z^i &= w_1^i w_2^i + \ldots + w_j^i w_1^i + \ldots + w_j^i w_n^i + \ldots + w_n^i w_{n-1}^i \\
      &= w_1^i w_2^i + \ldots + w_1^i w_{j-1}^i + w_1^i w_{j+1}^i + \ldots + w_{j-1}^i w_n^i + w_{j+1}^i w_1^i + \ldots + w_n^i w_{j-1}^i + w_n^i w_{j+1}^i + \ldots + w_n^i w_{n-1}^i \\
      &\leq \frac{w_1^i w_1^i + w_2^i w_2^i}{2} + \ldots + \frac{w_1^i w_1^i + w_{j-1}^i w_{j-1}^i}{2} + \frac{w_1^i w_1^i + w_{j+1}^i w_{j+1}^i}{2} + \ldots + \frac{w_{j-1}^i w_{j-1}^i + w_n^i w_n^i}{2} \\
      &\quad + \frac{w_{j+1}^i w_{j+1}^i + w_1^i w_1^i}{2} + \ldots + \frac{w_n^i w_n^i + w_{j-1}^i w_{j-1}^i}{2} + \frac{w_n^i w_n^i + w_{j+1}^i w_{j+1}^i}{2} + \ldots + \frac{w_n^i w_n^i + w_{n-1}^i w_{n-1}^i}{2} \\
      &= (n-2)(w_1^i w_1^i+ \ldots + w_{j-1}^i w_{j-1}^i + w_{j+1}^i w_{j+1}^i + \ldots+w_n^i w_n^i) \\
      &= (n-2)(w_1^i w_1^i+ \ldots + w_j^i w_j^i+ \ldots+w_n^i w_n^i)
    \end{split}
    \label{eq:sum2}
  \end{equation}
  The inequality follows from the fact that for two scalars $a$ and $b$, $\frac{a^2+b^2}{2} \geq ab$ always holds.
  Furthermore, the last equality holds since we have assumed that $w_j^i = 0$.
  From Equation~(\ref{eq:obj2}), we have:
  \begin{equation*}
    \begin{split}
      \mathrm{trace}(JX) &= n + \sum_{i=1}^d \big( w_1^i w_2^i + \ldots + w_n^i w_{n-1}^i \big) \\
      &= n + \sum_{i=1}^d z^i \\
      &\leq n + \sum_{i=1}^d (n-2) \big( w_1^i w_1^i+\ldots+w_n^i w_n^i \big) \\
      &= n + (n-2) \sum_{i=1}^d \big( w_1^i w_1^i+\ldots+w_n^i w_n^i \big) \\
      &= n(n-1)
    \end{split}
  \end{equation*}
  The inequality follows from Equation~(\ref{eq:sum2}), while the last equality follows from Equation~(\ref{eq:trace}).
  This concludes the proof.
\end{proof}
Note that problem~(\ref{eq:opt2}) has a strictly feasible solution, \ie a solution that satisfies the positive semidefiniteness requirement strictly.
For instance, by setting $X_{i,i} = \frac{1}{n}$ for $i \in {1,\ldots,m}$ and $X_{i,j} = 0$ for $i,j \in {1,\ldots,m}$ with $i \neq j$, we obtain a positive definite feasible solution, \ie $X \succ 0$ and all constraints are satisfied.
Therefore, strong duality holds \cite{vandenberghe1996semidefinite}.
There is an algorithm that for any $\epsilon > 0$, returns a rational closer than $\epsilon$ to the solution of problem~(\ref{eq:opt2}) in time bounded by a polynomial in $n$ and $\log(1/\epsilon)$ \cite{nesterov1994interior}.
We can thus find a number closer than $\frac{1}{2}$ to the optimal value of problem~(\ref{eq:opt2}) in time polynomial in $n$.
By comparing this number with $n^2$, we can answer if the compatibility graph contains a clique of order $n$ or not.

\section{Conclusion}
In this paper, we have presented an algorithm for the graph isomorphism problem.
The algorithm capitalizes on previous results that transform the problem into an equivalent problem of determining whether there exists a clique of specific order in an auxiliary graph structure.
We have shown that the answer to the above question can be given by solving a semidefinite program.
Given the frequent use of semidefinite programming in the design of algorithms \cite{goemans1995improved,srivastav1998finding,feige2000finding}, it seemed worthwhile to investigate its effectiveness in addressing the graph isomorphism problem. 
The results of this paper constitute a first step in this direction.
This paper still leaves some open questions.
Studying the dual of the proposed semidefinite program is perhaps the most interesting of them.

\section*{Acknowledgments}
The authors would like to thank Prof. Leo Liberti for discussions and comments on early versions of this paper.
Giannis Nikolentzos is supported by the project ``ESIGMA'' (ANR-17-CE40-0028).

\bibliographystyle{ieeetr}
\bibliography{biblio}

\end{document}